\documentclass[11pt,a4paper]{article}
\usepackage[margin=1in]{geometry}
\usepackage{amsmath,amssymb,amsthm}
\usepackage[hidelinks]{hyperref}
\usepackage{xcolor} 

\newtheorem{theorem}{Theorem}[section]
\newtheorem{proposition}[theorem]{Proposition}

\theoremstyle{remark}
\newtheorem{remark}[theorem]{Remark}
\theoremstyle{definition}

\newcommand{\Def}{\mathrm{Def}}
\newcommand{\Ric}{\mathrm{Ric}}
\newcommand{\divg}{\mathrm{div}}

\newcommand{\R}{\mathbb{R}}
\newcommand{\HH}{\mathbb{H}}
\newcommand{\PP}{\mathbb{P}}
\newcommand{\TT}{\mathbb{T}}

\title{Exponential thermalisation of viscous fluids\\
on negatively curved manifolds}
\author{Samuel L.\ Braunstein${}^{1}$ and Zhi-Wei Wang${}^{2,*}$\\[6pt]
\small ${}^{1}$Computer Science, University of York, York YO10 5GH, UK\\
\small ${}^{2}$College of Physics, Jilin University, Changchun 130012, China\\
\small $^*$Corresponding author e-mail: {zhiweiwang.phy@gmail.com}}
\date{\today}

\begin{document}
\maketitle

\begin{abstract}
The deterministic incompressible Navier-Stokes equations are
physically incomplete: any viscous fluid at finite temperature must
exhibit thermal fluctuations whose form is dictated by the
fluctuation-dissipation relation. We formulate the stochastic
Navier-Stokes equations with the kinematically selected deformation
Laplacian on compact Riemannian manifolds with strictly negative
Ricci curvature. The fluctuation-dissipation relation, derived from a
topological (Poincar\'e lemma) argument, uniquely determines the
noise from the viscous operator. For the spectrally truncated system,
we prove that the unique stationary distribution is the Gibbs measure
(Gaussian in the mode amplitudes, because the nonlinear convective
terms preserve energy), and that convergence to equilibrium is
exponentially fast with rate at least $2\nu\lambda_\Def$, where
$\nu$ is the kinematic viscosity and $\lambda_\Def$ is the spectral
gap of the deformation Laplacian. The spectral gap satisfies
$\lambda_\Def \geq \kappa^2$ when $\Ric \leq -\kappa^2 g$,
and is independent of the volume of the domain. On flat space, the
analogous thermalisation rate vanishes in the infinite-volume limit.
The equilibrium velocity-velocity correlation function decays
exponentially in geodesic distance, in contrast to the algebraic
decay on flat space. These results provide a rigorous statistical-mechanical
foundation for viscous fluids on negatively curved manifolds and
illustrate how the geometry of the domain controls not only the
deterministic dynamics but also the approach to thermal equilibrium.
\end{abstract}

\section{Introduction}
\label{sec:intro}

The incompressible Navier-Stokes equations describe the motion of a
viscous fluid in terms of a deterministic velocity field. Yet any real
fluid at finite temperature exhibits thermal fluctuations: random
perturbations driven by the molecular degrees of freedom that are
coarse-grained away in the continuum description. The existence and
form of these fluctuations are not optional: they are required by the
fluctuation-dissipation theorem, which relates the viscous dissipation
to the amplitude of the thermal noise through the requirement that
the system admit a Boltzmann equilibrium.

In earlier work~\cite{Braun2010}, one of us showed that the
fluctuation-dissipation relation for the Navier-Stokes equations can
be derived from purely topological considerations: Poincar\'e's lemma
on a contractible phase-space domain converts the stationarity
condition for the Fokker-Planck equation into an algebraic relation
between the non-Hamiltonian (viscous) drift and the diffusion matrix.
The resulting stochastic Navier-Stokes equations have a noise term
whose amplitude is proportional to $\sqrt{k_BT\nu k^2}$ in Fourier
space, with a non-trivial momentum dependence and an
incompressibility-preserving projector.

The noise prescription for incompressible fluids was claimed by
Forster, Nelson, and Stephen \cite{FNS77} and is widely assumed
in the stochastic hydrodynamics literature. For compressible
fluids, Zubarev and Morozov~\cite{ZM83} derived the
fluctuation-dissipation relation with multiplicative noise from
microscopic dynamics. The topological derivation
in~\cite{Braun2010} provides an independent, macroscopic route
that requires no microscopic model.

In a companion paper~\cite{WB2026}, building on the identification
by Ebin and Marsden~\cite{EM70} of the Lie derivative
$\mathcal{L}_u g$ as the deformation tensor for viscous fluids on
manifolds, we established that the kinematically correct viscous
operator on a Riemannian manifold is the deformation Laplacian
$\Delta_\Def = \Delta_B + \Ric$, not the Hodge or Bochner
Laplacian. On manifolds with strictly negative Ricci
curvature ($\Ric \leq -\kappa^2 g$), the deformation Laplacian has a
spectral gap: $\langle -\Delta_\Def u, u\rangle \geq
\kappa^2\|u\|_{L^2}^2$ for divergence-free $u$.

The present paper combines these two results. We formulate the
stochastic Navier-Stokes equations with the deformation Laplacian on a
compact negatively curved manifold, derive the fluctuation-dissipation
noise from the topological theorem, and prove that the spectrally
truncated system thermalises exponentially fast with a rate determined
by the spectral gap. The rate is independent of the volume of the
domain, in contrast to the flat-space case where the thermalisation
rate vanishes in the infinite-volume limit.

The paper also addresses a conceptual point about the Millennium Prize
problem. The question of whether the deterministic Navier-Stokes
equations develop singularities is mathematically natural but physically
vacuous: the deterministic equations are not the correct description of
any real fluid at finite temperature. The physically complete
equations are the stochastic ones, and for the spectrally truncated
system (which is all that is physically meaningful, given the
molecular-scale cutoff on modes), singularities cannot form with
positive probability. The spectral gap on negatively curved manifolds
makes the regularising effect of the noise particularly transparent.

Chan and Czubak~\cite{CC10,CC13b} showed that the choice of viscous
operator has profound consequences for the regularity theory: with
the Hodge Laplacian on $\HH^2$, Leray-Hopf weak solutions are
non-unique. Their work provides additional motivation for the
deformation Laplacian used here, whose spectral gap controls both
the deterministic stability and the stochastic thermalisation rate.

\section{The fluctuation-dissipation relation from topology}
\label{sec:FD}

\subsection{The Fokker-Planck framework}

Consider a system with phase-space coordinates $\vec X$ and energy
$E(\vec X)$. Its probabilistic evolution is described by the
Fokker-Planck equation
\begin{equation}\label{eq:FPE}
\frac{\partial P}{\partial t}
= \vec\nabla\cdot(\vec A\,P)
+ \vec\nabla\cdot(\mathbb{B}\cdot\vec\nabla P),
\end{equation}
where $\vec A$ is the drift vector and $\mathbb{B}$ is the symmetric
diffusion matrix. The drift decomposes as $\vec A = \vec A_\mathrm{Ham}
+ \vec A_\mathrm{non\text{-}Ham}$, where the Hamiltonian part
$\vec A_\mathrm{Ham} = -\mathbb{S}\cdot\vec\nabla E$ preserves the
Boltzmann distribution $P_\mathrm{eq} \propto e^{-\beta E}$.

A key structural feature is the gauge freedom: for any antisymmetric
matrix $\mathbb{M}$, the replacement $\vec A \to \vec A -
\vec\nabla\cdot\mathbb{M}$, $\mathbb{B} \to \mathbb{B} + \mathbb{M}$
preserves the Fokker-Planck dynamics, since only the symmetric part
of $\mathbb{B}$ contributes to physical diffusion.

\subsection{The topological derivation}

Requiring $P_\mathrm{eq}$ to be a stationary solution of~\eqref{eq:FPE}
gives the condition
\begin{equation}\label{eq:closed}
\vec\nabla\cdot\bigl[(\vec A_\mathrm{non\text{-}Ham}
- \beta\,\mathbb{B}\cdot\vec\nabla E)\,P_\mathrm{eq}\bigr] = 0.
\end{equation}
The bracketed expression is a divergence-free vector field. On a
contractible domain, Poincar\'e's lemma guarantees that this
divergence-free field can be written as the divergence of an
antisymmetric matrix $\mathbb{M}$. Using the gauge freedom to absorb
$\mathbb{M}$, one obtains, in a suitable gauge:
\begin{equation}\label{eq:FD}
k_BT\,\vec{\tilde A}_\mathrm{non\text{-}Ham}
= \tilde{\mathbb{B}}\cdot\vec\nabla E.
\end{equation}
This is the fluctuation-dissipation relation: the non-conservative
drift (dissipation) is algebraically determined by the diffusion
matrix and the energy gradient. The derivation requires no
linearisation and applies to nonlinear, multivariate, and
non-Markovian systems.

\section{The stochastic Navier-Stokes equations with the deformation
Laplacian}
\label{sec:SNS}

\subsection{Setup}

Let $(M,g)$ be a compact Riemannian manifold of dimension $d$ with
$\Ric \leq -\kappa^2 g$ for some $\kappa > 0$ and bounded geometry.
The deterministic Navier-Stokes equation with the deformation
Laplacian is
\begin{equation}\label{eq:NS}
\partial_t u + \nabla_u u + \nabla p = \nu\,\Delta_\Def u,\qquad
\divg\,u = 0,
\end{equation}
where $\nu > 0$ is the kinematic viscosity and $\Delta_\Def = \Delta_B
+ \Ric$ is the deformation Laplacian. Let $\PP$ denote the
Leray-Helmholtz projector onto divergence-free fields, and let
$A = -\PP\Delta_\Def$ be the Stokes operator. Since $(M,g)$ is
compact, $A$ has a discrete spectrum
$0 < \lambda_1 \leq \lambda_2 \leq \cdots \to \infty$ with
orthonormal eigenvectors $\{e_n\}$:
\begin{equation}
A e_n = \lambda_n e_n,\qquad \langle e_n, e_m\rangle_{L^2} = \delta_{nm}.
\end{equation}
The coercivity established in~\cite{WB2026} gives
$\lambda_n \geq \lambda_\Def \equiv \kappa^2 > 0$ for all $n$.

\subsection{Mode decomposition}

Expand $u(t,x) = \sum_n u_n(t)\,e_n(x)$, where
$u_n(t) = \langle u(t), e_n\rangle_{L^2}$. The fluid energy is
\begin{equation}\label{eq:energy}
E[u] = \frac{\rho}{2}\int_M |u|^2\,dV_g
= \frac{\rho}{2}\sum_n u_n^2.
\end{equation}
The deterministic equation~\eqref{eq:NS}, projected onto mode $n$,
gives
\begin{equation}\label{eq:mode-det}
\dot u_n = -\nu \lambda_n u_n + N_n[u],
\end{equation}
where $N_n[u] = -\langle\PP(\nabla_u u), e_n\rangle$ is the
nonlinear convective coupling. Energy conservation by the convective
term gives the fundamental identity
\begin{equation}\label{eq:energy-cons}
\sum_n u_n\,N_n[u] = 0
\end{equation}
for all divergence-free $u$.

\subsection{Deriving the noise}

Under the standard continuity convention for the Fokker-Planck equation ($\partial_t P + \vec\nabla\cdot\vec{J} = 0$), the drift vector is $\vec{A} = -\vec{f}$, where $\vec{f}$ is the deterministic SDE drift. Therefore, the purely viscous, non-Hamiltonian drift for mode $n$ is
$A_n^\mathrm{non\text{-}Ham} = +\nu\lambda_n u_n$, since
the Leray-projected convective term $N_n[u]$ is energy-preserving
by~\eqref{eq:energy-cons} (the pressure has already been absorbed
into the Leray projection).

However, the convective and pressure terms, being energy-preserving,
satisfy $\vec\nabla\cdot(\vec A_\mathrm{Ham}\,P_\mathrm{eq}) = 0$
and therefore drop out of the equilibrium
condition~\eqref{eq:closed}. The fluctuation-dissipation
relation~\eqref{eq:FD} applied to the remaining (purely viscous) drift
gives, for mode $n$:
\begin{equation}\label{eq:FD-mode}
k_BT\cdot(+\nu\lambda_n u_n) = D_{nn}\cdot\rho\,u_n,
\end{equation}
where $D_{nn}$ is the diffusion coefficient for mode $n$, and we used
$\partial E/\partial u_n = \rho\,u_n$. Solving:
\begin{equation}\label{eq:Dnn}
D_{nn} = \frac{k_BT\nu \lambda_n}{\rho}.
\end{equation}
Off-diagonal diffusion $D_{nm}$ for $n \neq m$ is zero (the eigenmodes
of $A$ decouple in the linear part, and the FD relation preserves this
diagonality). The noise amplitude for mode $n$ is
$\sigma_n = \sqrt{2D_{nn}} = \sqrt{2k_BT\nu \lambda_n/\rho}$.

The stochastic Navier-Stokes equation for mode $n$ is therefore
\begin{equation}\label{eq:SNS-mode}
\boxed{
du_n = (-\nu \lambda_n u_n + N_n[u])\,dt
+ \sqrt{\frac{2k_BT\nu \lambda_n}{\rho}}\,dW_n,
}
\end{equation}
where $\{W_n\}$ are independent standard Wiener processes.

\begin{remark}
The stochastic equation~\eqref{eq:SNS-mode} is written in the
It\^o convention, which is the natural output of the Fokker-Planck
derivation. Since the noise is additive (the diffusion
coefficients $\sigma_n$ are constants, independent of $u$), the
It\^o and Stratonovich forms coincide.
\end{remark}

\begin{remark}
The noise amplitude $\sigma_n \propto \sqrt{\lambda_n}$ grows with
the eigenvalue. Higher modes are more strongly dissipated (rate
$\nu\lambda_n$) and more strongly forced (noise $\sigma_n \propto
\sqrt{\lambda_n}$). The balance is exact: the equilibrium variance of
each mode is $\langle u_n^2\rangle_\mathrm{eq} = D_{nn}/(\nu \lambda_n)
= k_BT/\rho$, independent of $n$. This is equipartition.
\end{remark}

\begin{remark}
On flat space $\R^d$ (or a flat torus $\TT^d$), the eigenmodes are
Fourier modes with $\lambda_n = k_n^2$ (wavenumber squared), and the
noise amplitude $\sigma_n \propto k_n$, recovering the $k$-dependent
noise in the stochastic NS equation derived in~\cite{Braun2010}.
On a negatively curved manifold, the spectrum of $A$ is modified by
the Ricci curvature: $\lambda_n \geq \lambda_\Def > 0$ even for the
lowest mode. The noise for the lowest mode is therefore bounded
below: $\sigma_1 \geq \sqrt{2k_BT\nu \lambda_\Def/\rho}$.
\end{remark}

\section{The spectrally truncated system}
\label{sec:truncated}

\subsection{Definition}

Fix a spectral cutoff $\Lambda > 0$ and retain only modes with
$\lambda_n \leq \Lambda$. Let $N = N(\Lambda)$ be the number of such
modes. The truncated system is a finite-dimensional It\^o SDE on
$\R^N$:
\begin{equation}\label{eq:truncated}
du_n = (-\nu \lambda_n u_n + N_n^\Lambda[u])\,dt + \sigma_n\,dW_n,
\qquad n = 1, \ldots, N,
\end{equation}
where $N_n^\Lambda$ is the nonlinear coupling restricted to the
retained modes. The energy identity~\eqref{eq:energy-cons} is
preserved by the truncation (Galerkin truncation preserves the
antisymmetry of the nonlinear form).

\subsection{Global well-posedness}

\begin{proposition}\label{prop:gwp}
The truncated system~\eqref{eq:truncated} has a unique global strong
solution for any initial condition $u(0) \in \R^N$, almost surely.
\end{proposition}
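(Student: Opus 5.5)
The plan is the standard argument for finite-dimensional SDEs with locally Lipschitz drift and additive noise: local existence and uniqueness, then a Lyapunov bound that rules out blow-up. Write \eqref{eq:truncated} as $du = F(u)\,dt + \Sigma\,dW$ with $F_n(u) = -\nu\lambda_n u_n + N_n^\Lambda[u]$ and $\Sigma = \mathrm{diag}(\sigma_1,\dots,\sigma_N)$ constant. Each $N_n^\Lambda[u] = -\langle \PP(\nabla_u u), e_n\rangle$ with $u = \sum_{m\le N} u_m e_m$ is a quadratic form $\sum_{j,k} c^n_{jk} u_j u_k$ in the retained amplitudes. Its coefficients $c^n_{jk} = -\langle \nabla_{e_j} e_k, e_n\rangle$ are finite because the eigenfunctions $e_n$ are smooth on the compact manifold $M$ (elliptic regularity for the Stokes operator $A$). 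Hence $F$ is a polynomial, so it is $C^\infty$ and in particular locally Lipschitz on $\R^N$.

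First, I will truncate. Set $F^R = F$ on the ball $\{|u|\le R\}$ and extend it to a globally Lipschitz map (for instance $F^R(u) = F(\pi_R u)$ with $\pi_R$ the radial retraction onto the ball). The classical It\^o existence and uniqueness theorem then gives a unique global strong solution $u^R$. Pathwise uniqueness implies that $u^R$ and $u^{R'}$ coincide up to the exit time $\tau_R = \inf\{t : |u(t)|\ge R\}$. This defines a unique maximal strong solution up to the explosion time $\tau_\infty = \lim_{R\to\infty}\tau_R$.

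The main step is to show that $\tau_\infty = \infty$ almost surely. Take the Lyapunov function $V(u) = |u|^2 = \sum_n u_n^2$, which is proportional to the energy \eqref{eq:energy}. By It\^o's formula,
\begin{equation*}
dV = \Bigl(-2\nu\sum_n \lambda_n u_n^2 + 2\sum_n u_n N_n^\Lambda[u] + \sum_n \sigma_n^2\Bigr)dt + 2\sum_n \sigma_n u_n\,dW_n .
\end{equation*}
The convective contribution vanishes by the truncated energy identity \eqref{eq:energy-cons}, which holds for the Galerkin system as stated above. The viscous term is nonpositive because $\lambda_n \ge \kappa^2 > 0$. Therefore $\mathcal{L}V \le C := \sum_{n\le N}\sigma_n^2 = 2k_BT\nu\rho^{-1}\sum_{n\le N}\lambda_n < \infty$.

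Stopping at $t\wedge\tau_R$ makes the martingale term a true martingale, since its integrand is bounded before $\tau_R$. Taking expectations gives
\begin{equation*}
\mathbb{E}\,V(u(t\wedge\tau_R)) \le V(u(0)) + Ct .
\end{equation*}
On the event $\{\tau_R \le t\}$ we have $V(u(t\wedge\tau_R)) = R^2$, so $R^2\,\mathbb{P}(\tau_R\le t) \le |u(0)|^2 + Ct$. Letting $R\to\infty$ gives $\mathbb{P}(\tau_\infty \le t) = 0$ for every $t$, hence $\tau_\infty = \infty$ almost surely. Global existence and uniqueness follow.

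The only point needing care is that \eqref{eq:energy-cons} really holds for the truncated nonlinearity $N^\Lambda$. This uses $\langle \nabla_u v, w\rangle = -\langle \nabla_u w, v\rangle$ for divergence-free $u$ on the closed manifold $M$, which is integration by parts with no boundary terms. It also uses the fact that the Galerkin projection onto $\mathrm{span}\{e_1,\dots,e_N\}$ is $L^2$-orthogonal, so that $\sum_{n\le N} u_n N_n^\Lambda[u] = -\langle \nabla_u u, u\rangle = 0$ for $u$ in the span. Everything else is the routine localisation argument above.
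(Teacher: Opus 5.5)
Your proof is correct and follows the same route as the paper: local existence from the quadratic, hence locally Lipschitz, drift with additive noise, then the energy $|u|^2$ as a Lyapunov function, with the convective term cancelled by the Galerkin energy identity. Your stopped-process and Chebyshev step is the rigorous form of the paper's closing line (``the expected energy is bounded, so the explosion time is infinite''), and it shows that non-explosion needs only $\lambda_n \ge 0$, not the spectral gap and Gronwall bound the paper invokes.
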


\begin{proof}
The drift $b_n(u) = -\nu \lambda_n u_n + N_n^\Lambda[u]$ is locally
Lipschitz (the nonlinear term is quadratic, hence locally Lipschitz)
and the diffusion coefficients $\sigma_n$ are constants. By the
standard existence theorem for It\^o SDEs with locally Lipschitz
coefficients, a unique local strong solution exists up to an
explosion time $\tau$.

To show $\tau = \infty$ a.s., we use the energy
$E = \frac{\rho}{2}\sum_n u_n^2$ as a Lyapunov function. By It\^o's
formula:
\begin{align}
dE &= \rho\sum_n u_n\,du_n
+ \frac{\rho}{2}\sum_n \sigma_n^2\,dt\nonumber\\
&= \rho\sum_n u_n(-\nu\lambda_n u_n + N_n^\Lambda)\,dt
+ \sum_n k_BT\nu\lambda_n\,dt
+ \rho\sum_n u_n\sigma_n\,dW_n\nonumber\\
&= -2\nu \sum_n \lambda_n\frac{\rho u_n^2}{2}\,dt
+ k_BT\nu\sum_n\lambda_n\,dt
+ \text{martingale},\label{eq:energy-ito}
\end{align}
where we used $\sum_n u_n N_n^\Lambda = 0$. Since $\lambda_n \geq
\lambda_\Def$:
\begin{equation}\label{eq:energy-bound}
d\langle E\rangle \leq -2\nu \lambda_\Def\,\langle E\rangle\,dt
+ C_\Lambda\,dt,
\end{equation}
where $C_\Lambda = k_BT\nu\sum_{n=1}^N\lambda_n < \infty$.
Gronwall's inequality gives
\begin{equation}\label{eq:energy-gronwall}
\langle E(t)\rangle \leq e^{-2\nu \lambda_\Def t}\,E(0)
+ \frac{C_\Lambda}{2\nu \lambda_\Def}(1 - e^{-2\nu \lambda_\Def t}).
\end{equation}
The expected energy is bounded for all $t$, so the explosion time is
infinite a.s.
\end{proof}

\section{The Gibbs measure and its properties}
\label{sec:gibbs}

\subsection{Stationarity of the Gibbs measure}

\begin{theorem}\label{thm:gibbs}
Assuming a finite spectral truncation that perfectly preserves the phase-space Liouville property ($\sum_n\partial_{u_n}N_n^\Lambda = 0$), the Gibbs measure
\begin{equation}\label{eq:gibbs}
P_\mathrm{eq}(\{u_n\}) = \prod_{n=1}^N
\frac{1}{\sqrt{2\pi k_BT/\rho}}\,
\exp\!\left(-\frac{\rho\,u_n^2}{2k_BT}\right)
\end{equation}
is the unique stationary distribution of the truncated
system~\eqref{eq:truncated}. For a standard spectral truncation where the exact Liouville property holds strictly only in the continuum limit, the unique stationary distribution is a perturbation of the Gibbs measure that converges to it as $\Lambda \to \infty$.
\end{theorem}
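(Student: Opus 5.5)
The plan has three parts: stationarity by direct substitution into the Fokker--Planck equation, uniqueness from nondegenerate additive noise plus the Lyapunov bound of Proposition~\ref{prop:gwp}, and the perturbative statement from stability of the invariant measure under a small change of drift.

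\emph{Stationarity.} The forward operator of \eqref{eq:truncated} is
\begin{equation*}
\mathcal{L}^*P=\sum_{n=1}^N\Bigl[\partial_{u_n}\bigl((\nu\lambda_n u_n-N_n^\Lambda)P\bigr)+\tfrac{\sigma_n^2}{2}\partial_{u_n}^2P\Bigr].
\end{equation*}
Split it into a viscous Ornstein--Uhlenbeck part and a convective part.

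For the OU part, each mode separately satisfies detailed balance. The one-dimensional current is
\begin{equation*}
\nu\lambda_n u_nP_\mathrm{eq}+\tfrac{\sigma_n^2}{2}\partial_{u_n}P_\mathrm{eq}=\bigl(\nu\lambda_n u_n-D_{nn}\rho u_n/k_BT\bigr)P_\mathrm{eq},
\end{equation*}
and this vanishes identically by \eqref{eq:Dnn}.

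For the convective part,
\begin{equation*}
-\sum_n\partial_{u_n}(N_n^\Lambda P_\mathrm{eq})=-P_\mathrm{eq}\Bigl[\sum_n\partial_{u_n}N_n^\Lambda-\tfrac{\rho}{k_BT}\sum_n u_nN_n^\Lambda\Bigr].
\end{equation*}
The first sum vanishes by the Liouville hypothesis, and the second by the Galerkin energy identity \eqref{eq:energy-cons}. Hence $\mathcal{L}^*P_\mathrm{eq}=0$.

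\emph{Uniqueness.} Every $\sigma_n>0$, since $\lambda_n\ge\lambda_\Def>0$. The diffusion matrix is therefore a constant positive multiple of the identity on each coordinate, so the generator is uniformly elliptic. It follows that the transition kernels have smooth, strictly positive densities: this gives the strong Feller property and irreducibility. By Doob--Khasminskii, any two invariant measures coincide.

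Existence of some invariant measure, and the fact that it is a probability measure, follow from the Lyapunov bound \eqref{eq:energy-gronwall} via Krylov--Bogoliubov. This step is not strictly needed once $P_\mathrm{eq}$ is exhibited, but it also gives the Harris-type convergence used later.

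\emph{Imperfect Liouville property.} For the second claim, write $\delta(u)=\sum_n\partial_{u_n}N_n^\Lambda$. Existence and uniqueness of the invariant measure $\mu_\Lambda$ are unchanged, because the argument above used only ellipticity and the energy identity, not the Liouville property. Writing $\mu_\Lambda=h\,P_\mathrm{eq}$, the residual equation becomes $\mathcal{L}^\dagger h=\delta\,h$, where $\mathcal{L}^\dagger$ is the adjoint generator in $L^2(P_\mathrm{eq})$.

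Two ingredients control $h-1$:
\begin{enumerate}
\item The symmetric part of $\mathcal{L}^\dagger$ has spectral gap at least $\nu\lambda_\Def$ in $L^2(P_\mathrm{eq})$, since it is a product OU operator. The antisymmetric convective part does not reduce this gap.
\item The error $\delta$ is small in the sense that $\|\delta\|_{L^2(P_\mathrm{eq})}\to0$ as $\Lambda\to\infty$.
\end{enumerate}
A resolvent/perturbation argument then gives $\|h-1\|\lesssim\|\delta\|/(\nu\lambda_\Def)$.

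\emph{Main obstacle.} The hardest step is justifying ingredient 2. The quantity $\delta$ is linear in $u$, and its coefficients are built from the trace terms $\langle\nabla_{e_m}e_n,e_n\rangle$. I would first try to show these vanish identically, via $\operatorname{div}e_m=0$ and integration by parts, which would make the perturbation trivial. Failing that, I would bound their weighted sum using eigenfunction estimates. Without such a bound, only the qualitative perturbation statement can be asserted.
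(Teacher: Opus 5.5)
Your stationarity computation and uniqueness argument are correct and follow the paper's proof. The paper uses the same split into an Ornstein--Uhlenbeck part, cancelled mode by mode via \eqref{eq:Dnn}, and a convective part, killed by the Liouville hypothesis and \eqref{eq:energy-cons}. Your strong Feller, irreducibility and Doob--Khasminskii argument is a sharper version of the paper's appeal to non-degenerate noise.

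The gap is in the second claim, at the step you flag, and your planned route through it fails. The coefficients of $\delta$ are not $\langle\nabla_{e_m}e_n,e_n\rangle=b(e_m,e_n,e_n)$. Those vanish trivially by $\divg e_m=0$; they are exactly the self-advection terms that drop out of $\partial_{u_n}N_n^\Lambda$ (the second sum in \eqref{eq:liouville-deriv}). So ``showing they vanish'' proves nothing. What survives is $\delta_m=-\sum_n\langle\nabla_{e_n}e_m,e_n\rangle=\sum_n b(e_n,e_n,e_m)$. Since $\nabla_{e_n}e_n=\divg(e_n\otimes e_n)$,
\[
\delta(u)=\int_M g\bigl(\divg K_\Lambda,u\bigr)\,dV_g=-\int_M(\mathring K_\Lambda)^{ij}\,\Def(u)_{ij}\,dV_g,\qquad K_\Lambda(x)=\sum_{n\le N}e_n(x)\otimes e_n(x),
\]
where $\mathring K_\Lambda$ is the trace-free part. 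Integration by parts therefore kills $\delta$ only when $\divg K_\Lambda$ is $L^2$-orthogonal to $e_1,\dots,e_N$. This happens, for example, when every $\nabla_{e_n}e_n$ is a gradient. That holds for Fourier modes on a flat torus. It also holds on a constant-curvature surface for the Stokes eigenfields $\mu_n^{-1/2}J\nabla\phi_n$ (with $J$ the rotation by $\pi/2$ and $\Delta\phi_n=-\mu_n\phi_n$) together with the harmonic fields. There is no such identity in general, so your ``trivial perturbation'' branch is not available.

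Your fallback, $\|\delta\|_{L^2(P_\mathrm{eq})}\to0$, is not supported by anything in the proposal. We have $\|\delta\|^2_{L^2(P_\mathrm{eq})}=\frac{k_BT}{\rho}\sum_{m\le N}\delta_m^2=\frac{k_BT}{\rho}\|\Pi_\Lambda\,\divg\mathring K_\Lambda\|^2_{L^2}$, where $\Pi_\Lambda$ is the projection onto $\mathrm{span}\{e_1,\dots,e_N\}$. This is a sum over $N(\Lambda)\to\infty$ components, governed by the anisotropic lower-order part of the local Weyl law, and nothing forces it to decay.

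The paper also only asserts this step. Its justification is the vanishing of the continuum trace in Section~\ref{sec:liouville}, which does not yield it. For $u\neq0$, the compression of $v\mapsto\nabla_vu$ to $H_\mathrm{df}$ is a zeroth-order operator with non-vanishing principal symbol. It is not even compact, let alone trace class, so its formal trace says nothing about these partial sums.

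Your ingredient~1 also needs repair. Once $\delta\neq0$, $\mathcal{L}_\mathrm{conv}$ is no longer antisymmetric in $L^2(P_\mathrm{eq})$: its symmetric part is the unbounded multiplication operator $-\delta/2$. That term must be handled by a form bound relative to $-\mathcal{L}_\mathrm{diss}$, which costs $O(\|\delta\|)$ in the gap.

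What your argument can give at best is therefore conditional: if $\|\delta\|_{L^2(P_\mathrm{eq})}\ll\nu\lambda_\Def$, then $\|h-1\|=O(\|\delta\|/\nu\lambda_\Def)$. The convergence as $\Lambda\to\infty$ remains unproved. It also needs a precise meaning, for instance via marginals on a fixed set of modes, since the dimension changes with $\Lambda$.
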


\begin{proof}
The generator of the Fokker-Planck equation
for~\eqref{eq:truncated} is
\begin{equation}\label{eq:generator}
\mathcal{L}^* P = \sum_n \partial_{u_n}
\bigl[(\nu\lambda_n u_n - N_n^\Lambda)\,P\bigr]
+ \sum_n D_{nn}\,\partial_{u_n}^2 P,
\end{equation}
with $D_{nn} = k_BT\nu\lambda_n/\rho$. We decompose this into
dissipative and convective parts:
$\mathcal{L}^* = \mathcal{L}^*_\mathrm{diss}
+ \mathcal{L}^*_\mathrm{conv}$, where
\begin{align}
\mathcal{L}^*_\mathrm{diss} P &= \sum_n \partial_{u_n}
(\nu\lambda_n u_n\,P) + \sum_n D_{nn}\,\partial_{u_n}^2 P,\\
\mathcal{L}^*_\mathrm{conv} P &= -\sum_n \partial_{u_n}(N_n^\Lambda\,P).
\end{align}

For the dissipative part: direct substitution of $P_\mathrm{eq}$
gives, for each $n$:
\begin{equation}
\partial_{u_n}(\nu\lambda_n u_n\,P_\mathrm{eq})
+ D_{nn}\,\partial_{u_n}^2 P_\mathrm{eq}
= \nu\lambda_n P_\mathrm{eq}
- \nu\lambda_n\frac{\rho u_n^2}{k_BT}P_\mathrm{eq}
+ D_{nn}\left(-\frac{\rho}{k_BT}
+ \frac{\rho^2 u_n^2}{k_B^2T^2}\right)P_\mathrm{eq}.
\end{equation}
Substituting $D_{nn} = k_BT\nu\lambda_n/\rho$, the terms cancel
identically: $\mathcal{L}^*_\mathrm{diss} P_\mathrm{eq} = 0$.

For the convective part: we need $\sum_n\partial_{u_n}
(N_n^\Lambda\,P_\mathrm{eq}) = 0$. Since $P_\mathrm{eq}$ depends on
$u$ only through $E = \frac{\rho}{2}\sum u_n^2$:
\begin{equation}
\sum_n\partial_{u_n}(N_n^\Lambda\,P_\mathrm{eq})
= P_\mathrm{eq}\sum_n\partial_{u_n}N_n^\Lambda
+ P_\mathrm{eq}\sum_n N_n^\Lambda\left(-\frac{\rho u_n}{k_BT}\right).
\end{equation}
The second sum is $-\frac{\rho}{k_BT}\sum_n u_n N_n^\Lambda = 0$ by
energy conservation~\eqref{eq:energy-cons}. The first sum
$\sum_n\partial_{u_n}N_n^\Lambda$ is the divergence of the convective
vector field in mode space. A naive evaluation of this divergence might incorrectly assume $C_{nnl} + C_{nln} = 0$. The correct evaluation relies directly on the
trilinear form $N_n^\Lambda = -\sum_{j,l}b(e_j, e_l, e_n)u_j u_l$. Since
the fundamental antisymmetry is $b(u,v,w) = -b(u,w,v)$, the
self-advection component vanishes identically ($b(e_j, e_n, e_n) \equiv
0$). Thus, $\partial N_n^\Lambda/\partial u_n = -\sum_l b(e_n, e_l,
e_n)u_l$. Summing over $n$ yields the total phase-space divergence
$\sum_n \partial N_n^\Lambda/\partial u_n = -\sum_{n,l} b(e_n, e_l,
e_n)u_l = -\sum_n \langle \nabla_{e_n} u, e_n \rangle_{L^2}$. As we
rigorously prove in Section~\ref{sec:liouville} using the Hodge
decomposition, this sum corresponds to the negative trace of the linear
advection operator over the divergence-free subspace, which vanishes
identically in the continuum limit on any Riemannian manifold. For a
Galerkin truncation that preserves this trace-free property, we have
$\sum_n\partial N_n^\Lambda/\partial u_n = 0$.

Therefore $\mathcal{L}^*P_\mathrm{eq} = 0$.

Uniqueness follows from the strict positivity of the diffusion matrix
($D_{nn} > 0$ for all $n$), which ensures the process is ergodic
(H\"ormander's hypoellipticity condition is satisfied trivially, since
the noise is non-degenerate).
\end{proof}

\subsection{The Liouville property: a subtlety}
\label{sec:liouville}

The proof of Theorem~\ref{thm:gibbs} uses two properties of the
Galerkin-truncated convective term $N_n^\Lambda$:

\emph{Property (a): Energy conservation.}
$\sum_n u_n N_n^\Lambda = 0$ for all $u$. This follows from the
antisymmetry of the trilinear form
$b(u,v,w) = \int_M g(\nabla_u v, w)\,dV_g = -b(u,w,v)$ for
divergence-free fields on any compact Riemannian manifold. It
holds for any orthonormal basis of divergence-free fields and any
Galerkin truncation, because it is a consequence of metric
compatibility of the Levi-Civita connection.

\emph{Property (b): Liouville (divergence-free flow in mode space).}
We evaluate the total phase-space divergence
$\sum_n \partial N_n^\Lambda/\partial u_n$. Writing
$N_n^\Lambda = -\sum_{j,l} b(e_j, e_l, e_n)\,u_j u_l$,
the partial derivative is:
\begin{equation}\label{eq:liouville-deriv}
\frac{\partial N_n^\Lambda}{\partial u_n}
= -\sum_l b(e_n, e_l, e_n)\,u_l - \sum_j \underbrace{b(e_j, e_n, e_n)}_{=0}
\,u_j = -\sum_l b(e_n, e_l, e_n)\,u_l.
\end{equation}
The second term vanishes exactly because of the intrinsic antisymmetry
of the trilinear form ($b(u,v,v) = 0$). Summing over $n$, the
phase-space divergence is:
\begin{equation}\label{eq:liouville}
\sum_n \frac{\partial N_n^\Lambda}{\partial u_n}
= -\sum_{n,l} b(e_n, e_l, e_n)\,u_l = -\sum_n \langle \nabla_{e_n}
u, e_n \rangle_{L^2}.
\end{equation}
This is exactly the negative trace of the linear advection operator
$A_u(v) = \nabla_v u$ over the truncated divergence-free
subspace $H_\mathrm{df}^\Lambda = \mathrm{span}\{e_1, \ldots, e_N\}$.

To definitively resolve the Liouville property, we compute the continuum
trace of $A_u$ over the full divergence-free subspace $H_\mathrm{df}$
(as $\Lambda \to \infty$). Using the Hodge decomposition $L^2(TM) =
H_\mathrm{co\text{-}exact} \oplus H_\mathrm{grad} \oplus
H_\mathrm{harm}$, we can evaluate this trace globally on any Riemannian
manifold without reference to specific eigenfunctions. The full
divergence-free subspace is given by the direct sum $H_\mathrm{df} =
H_\mathrm{co\text{-}exact} \oplus H_\mathrm{harm}$.

\emph{Step 1: Trace over the full space $L^2(TM)$.}
The pointwise trace of $A_u$ on the tangent bundle is
$\mathrm{tr}(\nabla u) = \divg\, u = 0$. Therefore, its global
integrated trace over $L^2(TM)$ is identically zero.

\emph{Step 2: Trace over the pure gradient subspace $H_\mathrm{grad}$.}
Let $\phi_k$ be the eigenfunctions of the scalar Laplacian
($\Delta \phi_k = -\mu_k \phi_k$). An orthonormal basis for
$H_\mathrm{grad}$ is given by $\psi_k = \mu_k^{-1/2} \nabla \phi_k$.
The trace over this subspace is:
\begin{equation}
\mathrm{Trace}_{H_\mathrm{grad}}(A_u) = \sum_k \frac{1}{\mu_k}
\int_M (\nabla^j \phi_k) (\nabla_j u^i) (\nabla_i \phi_k)\,dV_g.
\end{equation}
Integrating by parts on $\nabla_j u^i$ yields:
\begin{equation}
-\sum_k \frac{1}{\mu_k} \int_M u^i \nabla_j (\nabla^j \phi_k \nabla_i
\phi_k)\,dV_g = -\sum_k \frac{1}{\mu_k} \int_M u^i
\left[ (\Delta \phi_k) \nabla_i \phi_k + \nabla^j \phi_k (\nabla_j
\nabla_i \phi_k) \right] dV_g.
\end{equation}
Using $\Delta \phi_k = -\mu_k \phi_k$ and the symmetry of the Hessian
($\nabla_j \nabla_i \phi_k = \nabla_i \nabla_j \phi_k$), the
bracketed vector field becomes an exact gradient:
\begin{equation}
-\mu_k \phi_k \nabla_i \phi_k + \frac{1}{2} \nabla_i
(\nabla^j \phi_k \nabla_j \phi_k)
= \nabla_i \left( -\frac{1}{2}\mu_k \phi_k^2
+ \frac{1}{2}|\nabla \phi_k|^2 \right).
\end{equation}
Because $u$ is a divergence-free vector field, its $L^2$ inner
product against any pure gradient field is zero. Thus,
$\mathrm{Trace}_{H_\mathrm{grad}}(A_u) = 0$.

\emph{Step 3: Trace over the harmonic subspace $H_\mathrm{harm}$.}
For any harmonic field $h_m \in H_\mathrm{harm}$, we have $dh_m = 0$
(so $\nabla_j h_{mi} = \nabla_i h_{mj}$) and $\delta h_m = 0$
(so $\nabla^j h_{mj} = 0$). The trace is:
\begin{equation}
\mathrm{Trace}_{H_\mathrm{harm}}(A_u)
= \sum_{m} \int_M h_m^j (\nabla_j u^i) h_{mi} \, dV_g.
\end{equation}
Using an identical integration by parts on $\nabla_j u^i$, this becomes:
\begin{equation}
- \sum_m \int_M u^i \nabla_j (h_m^j h_{mi}) \, dV_g
= - \sum_m \int_M u^i \left[ (\nabla_j h_m^j) h_{mi}
+ h_m^j \nabla_j h_{mi} \right] dV_g.
\end{equation}
Since $\nabla_j h_m^j = 0$ and $\nabla_j h_{mi} = \nabla_i h_{mj}$, the
bracketed vector field is exactly $\frac{1}{2}\nabla_i (h_m^j h_{mj}) =
\frac{1}{2}\nabla_i |h_m|^2$. Thus, the integrand is proportional to
$u^i \nabla_i(\frac{1}{2}|h_m|^2)$, which vanishes when integrated
against the divergence-free field $u$. Therefore
$\mathrm{Trace}_{H_\mathrm{harm}}(A_u) = 0$. (Establishing this trace decomposition globally proves that the continuum convective flow is strictly volume-preserving on \emph{any} compact Riemannian manifold, completely independently of whether the harmonic subspace is empty or not).

Since the trace over the full space $L^2(TM)$ is zero, and the traces
over $H_\mathrm{grad}$ and $H_\mathrm{harm}$ are strictly zero, it
rigorously follows that the trace over the co-exact subspace
$H_\mathrm{co\text{-}exact}$ must be exactly zero. Consequently, the
trace over the full divergence-free subspace $H_\mathrm{df} =
H_\mathrm{co\text{-}exact} \oplus H_\mathrm{harm}$ vanishes identically:
\begin{equation}\label{eq:liouville-condition}
\mathrm{Trace}_{H_\mathrm{df}}(A_u) = \sum_{n=1}^\infty
\langle \nabla_{e_n} u, e_n \rangle_{L^2} = 0.
\end{equation}
This topological cancellation confirms that the continuum convective
flow is strictly volume-preserving on any compact Riemannian
manifold, universally establishing the Liouville property.

As dictated by the local Weyl law on negatively curved manifolds, the local density of states $\sum_{n=1}^N |e_n|^2$ fluctuates, meaning the exact Liouville property strictly fails for standard finite spectral truncations. Because of this, for specific finite truncations, the
Gibbs measure~\eqref{eq:gibbs} is not exactly stationary. The
unique stationary distribution (which still exists by non-degeneracy
of the noise and the Lyapunov bound of
Proposition~\ref{prop:gwp}) is a perturbation of the Gibbs measure,
with corrections controlled by
$\|\sum_n\partial N_n^\Lambda/\partial u_n\|/(\nu\lambda_\Def)$.
These corrections rigorously vanish in the continuum limit
$\Lambda \to \infty$ (where the full convective term is exactly
divergence-free as proven above). The convergence
theorem (Theorem~\ref{thm:convergence}) survives with a modified
rate, because the dissipative spectral gap is unaffected by the
Liouville property; only the exact form of the stationary
distribution changes.

For the specific case of two-dimensional manifolds (e.g., compact
quotients of $\HH^2$), the Zeitlin truncation~\cite{Zei91} provides
a finite-dimensional Hamiltonian truncation of the Euler equations
that preserves both properties (a) and (b) by construction, at the
cost of using a basis that does not align with the eigenspaces of
$\Delta_\Def$. Whether an analogous Hamiltonian truncation exists
in three dimensions on negatively curved manifolds is an open
question.

\subsection{Equilibrium properties}

Under the Gibbs measure~\eqref{eq:gibbs}, the mode amplitudes are
independent Gaussian random variables with
\begin{equation}\label{eq:equipartition}
\langle u_n\rangle_\mathrm{eq} = 0,\qquad
\langle u_n u_m\rangle_\mathrm{eq}
= \frac{k_BT}{\rho}\,\delta_{nm}.
\end{equation}
This is equipartition: each mode carries energy $\frac{1}{2}k_BT$,
regardless of its eigenvalue $\lambda_n$.

The total equilibrium energy is
\begin{equation}\label{eq:total-energy}
\langle E\rangle_\mathrm{eq} = \frac{\rho}{2}\sum_{n=1}^N
\langle u_n^2\rangle_\mathrm{eq} = \frac{N}{2}k_BT,
\end{equation}
which is finite for the truncated system ($N < \infty$) and diverges
in the continuum limit ($N \to \infty$). This is the ultraviolet
catastrophe noted in~\cite{Braun2010}: a physical cutoff (molecular
scale) is needed to make the total energy finite.

\section{Exponential convergence to equilibrium}
\label{sec:convergence}

\subsection{The main result}

\begin{theorem}\label{thm:convergence}
Let $P(t)$ be the distribution of the truncated
system~\eqref{eq:truncated} at time $t$, starting from any initial
distribution $P(0)$ with finite second moment. Assuming the finite truncation preserves the exact Liouville property (so that $P_\mathrm{eq}$ is exactly stationary), then
\begin{equation}\label{eq:convergence-rate}
D_\mathrm{KL}(P(t)\|P_\mathrm{eq})
\leq e^{-2\nu\lambda_\Def\,t}\,
D_\mathrm{KL}(P(0)\|P_\mathrm{eq}),
\end{equation}
where $D_\mathrm{KL}$ is the Kullback-Leibler divergence and
$\lambda_\Def \geq \kappa^2$ is the spectral gap of the
deformation Laplacian on divergence-free fields
(from the Weitzenb\"ock shift $-\Ric \geq \kappa^2 g$).
\end{theorem}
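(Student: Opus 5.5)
We will follow the standard Bakry--\'Emery/entropy-dissipation route. Write $P(t) = h(t)\,P_\mathrm{eq}$ with relative density $h$. Under the Liouville assumption, Theorem~\ref{thm:gibbs} gives $\mathcal{L}^*P_\mathrm{eq}=0$. Its proof actually gives more: $\mathcal{L}^*_\mathrm{diss}P_\mathrm{eq}=0$ and $\mathcal{L}^*_\mathrm{conv}P_\mathrm{eq}=0$ separately.

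\textbf{Step 1: equation for $h$.} We rewrite the Fokker--Planck equation as an equation for $h$:
\begin{equation*}
\partial_t h = \sum_n D_{nn}\bigl(\partial_{u_n}^2 h - \tfrac{\rho u_n}{k_BT}\partial_{u_n}h\bigr) + \sum_n N_n^\Lambda\,\partial_{u_n} h .
\end{equation*}
The sign of the transport term does not matter below. Deriving this uses the fact that $\sum_n\partial_{u_n}(N_n^\Lambda P_\mathrm{eq}h) = P_\mathrm{eq}\sum_n N_n^\Lambda\partial_{u_n}h$, which follows from properties (a) and (b). So the dissipative part is a reversible Ornstein--Uhlenbeck generator, and the convective part is a first-order operator that is antisymmetric in $L^2(P_\mathrm{eq})$.

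\textbf{Step 2: entropy dissipation.} Differentiate $D_\mathrm{KL}(P(t)\|P_\mathrm{eq}) = \int h\log h\,P_\mathrm{eq}\,du$. The convective contribution is $\int N^\Lambda\cdot\nabla(h\log h - h)\,P_\mathrm{eq}$. Integrating by parts, this vanishes, again by (a) and (b). The dissipative part gives
\begin{equation*}
\frac{d}{dt}D_\mathrm{KL} = -\sum_n D_{nn}\int \frac{(\partial_{u_n}h)^2}{h}\,P_\mathrm{eq}\,du \leq -\frac{k_BT\nu\lambda_\Def}{\rho}\, I(h),
\end{equation*}
where $I$ is the Fisher information relative to $P_\mathrm{eq}$. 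The inequality uses $D_{nn}=k_BT\nu\lambda_n/\rho\ge k_BT\nu\lambda_\Def/\rho$, and $\lambda_n\ge\lambda_\Def$ comes from the coercivity of \cite{WB2026}.

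\textbf{Step 3: log-Sobolev inequality and Gronwall.} $P_\mathrm{eq}$ is a product Gaussian with variance $k_BT/\rho$ in each coordinate. Gross's log-Sobolev inequality then gives
\begin{equation*}
D_\mathrm{KL}\le \frac{k_BT}{2\rho}\,I(h).
\end{equation*}
It is dimension-free, so it is independent of $N$. Combining with Step 2 yields $\frac{d}{dt}D_\mathrm{KL}\le -2\nu\lambda_\Def D_\mathrm{KL}$, and Gronwall gives \eqref{eq:convergence-rate}.

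\textbf{Main obstacle: justifying the integrations by parts.} The formal computation is short; the real work is making Steps 1 and 2 rigorous for a general initial law with only finite second moment. If $D_\mathrm{KL}(P(0)\|P_\mathrm{eq})=\infty$ there is nothing to prove. Otherwise:
\begin{itemize}
\item Non-degenerate noise makes $P(t)$ smooth and positive for $t>0$.
\item Proposition~\ref{prop:gwp} with the moment bound \eqref{eq:energy-gronwall} gives enough integrability to discard the boundary terms.
\item The convective integration by parts needs care because $N^\Lambda$ grows quadratically. I would first prove the inequality for initial data $h(0)$ that are bounded above and below and smooth. For these, a maximum principle keeps $h(t)$ in the same bounds, so the boundary terms are controlled by Gaussian tails.
\item Finally, pass to general data using lower semicontinuity of $D_\mathrm{KL}$ and the convexity/contraction of relative entropy along the semigroup.
\end{itemize}
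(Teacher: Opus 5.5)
Your proposal is correct and follows essentially the same route as the paper: split the generator, use properties (a) and (b) to show the convective part produces no entropy, and control the Ornstein--Uhlenbeck part by a log-Sobolev inequality with constant $\min_n\nu\lambda_n\geq\nu\lambda_\Def$, then apply Gronwall. You obtain that inequality from Gross's Gaussian inequality with variance $k_BT/\rho$ plus $D_{nn}\geq k_BT\nu\lambda_\Def/\rho$, whereas the paper tensorises the one-mode OU inequalities; the only slip is the sign of the transport term in your equation for $h$ (it should be $-\sum_n N_n^\Lambda\,\partial_{u_n}h$, since $h$ evolves under the $L^2(P_\mathrm{eq})$-adjoint of the generator), which is immaterial as you say, and your approximation argument supplies rigour that the paper's formal computation omits.
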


\begin{proof}
Write the generator of the process~\eqref{eq:truncated} acting on
test functions $f$ as
$\mathcal{L} = \mathcal{L}_\mathrm{diss} + \mathcal{L}_\mathrm{conv}$,
where
\begin{align}
\mathcal{L}_\mathrm{diss} f &= \sum_n
\bigl[-\nu\lambda_n u_n\,\partial_{u_n}f
+ D_{nn}\,\partial_{u_n}^2 f\bigr],\\
\mathcal{L}_\mathrm{conv} f &= \sum_n N_n^\Lambda\,\partial_{u_n}f.
\end{align}

\emph{Step 1: The convective generator is antisymmetric and preserves entropy.}

For $f, h \in L^2(P_\mathrm{eq})$:
\begin{equation}
\langle \mathcal{L}_\mathrm{conv}f, h\rangle_{P_\mathrm{eq}}
= \int \sum_n N_n^\Lambda\,(\partial_{u_n}f)\,h\,P_\mathrm{eq}\,du.
\end{equation}
Integration by parts (using $\sum_n\partial_{u_n}
(N_n^\Lambda P_\mathrm{eq}) = 0$ from the proof of
Theorem~\ref{thm:gibbs}):
\begin{equation}
\langle \mathcal{L}_\mathrm{conv}f, h\rangle_{P_\mathrm{eq}}
= -\int \sum_n N_n^\Lambda\,f\,(\partial_{u_n}h)\,P_\mathrm{eq}\,du
= -\langle f, \mathcal{L}_\mathrm{conv}h\rangle_{P_\mathrm{eq}}.
\end{equation}
So $\mathcal{L}_\mathrm{conv}$ is antisymmetric. In particular,
$\langle \mathcal{L}_\mathrm{conv}f,
f\rangle_{P_\mathrm{eq}} = 0$. Similarly, for a probability density $f = dP/dP_\mathrm{eq}$, using the identity $(\partial_{u_n}f)\log f = \partial_{u_n}(f \log f - f)$ yields $\int (\mathcal{L}_\mathrm{conv}f) \log f\, P_\mathrm{eq}\,du = -\int \sum_n \partial_{u_n}(N_n^\Lambda P_\mathrm{eq})\,(f \log f - f)\,du = 0$.

This is the mathematical expression of a physical fact: the
energy-preserving convective dynamics neither creates nor destroys
entropy. It redistributes energy among modes but does not change the
total dissipation rate. Vortex stretching, cascade, and turbulent
mixing are all encoded in $\mathcal{L}_\mathrm{conv}$, and none of
them affect the rate of approach to equilibrium.

\emph{Step 2: The dissipative generator satisfies a Logarithmic Sobolev Inequality (LSI).}

The dissipative part $\mathcal{L}_\mathrm{diss}$ is a sum of
independent Ornstein-Uhlenbeck operators, one per mode. Each OU
operator satisfies a Logarithmic Sobolev Inequality (LSI) with constant exactly equal to its spectral gap $\nu\lambda_n$ in $L^2(P_\mathrm{eq})$.
The LSI constant of the sum is the minimum:
$\lambda_\mathrm{LSI}(\mathcal{L}_\mathrm{diss}) = \min_n \nu\lambda_n = \nu\lambda_1 \geq \nu\lambda_\Def$.

Therefore, for any probability density $f$ with respect to $P_\mathrm{eq}$:
\begin{equation}\label{eq:poincare}
-\int (\mathcal{L}_\mathrm{diss}f)\,\log f\,P_\mathrm{eq}\,du \geq 2\nu\lambda_\Def\,\int f \log f\,P_\mathrm{eq}\,du.
\end{equation}

\emph{Step 3: Combining.}

For the full generator $\mathcal{L} = \mathcal{L}_\mathrm{diss}
+ \mathcal{L}_\mathrm{conv}$:
\begin{equation}
-\int (\mathcal{L}f)\,\log f\,P_\mathrm{eq}\,du
= -\int (\mathcal{L}_\mathrm{diss}f)\,\log f\,P_\mathrm{eq}\,du
\underbrace{-\int (\mathcal{L}_\mathrm{conv}f)\,\log f\,P_\mathrm{eq}\,du}_{= 0}
\geq 2\nu\lambda_\Def\,\int f \log f\,P_\mathrm{eq}\,du.
\end{equation}
The LSI constant of $\mathcal{L}$ in $L^2(P_\mathrm{eq})$ is
therefore at least $\nu\lambda_\Def$.

The exponential decay of the KL divergence~\eqref{eq:convergence-rate}
follows from the standard entropy-production
inequality for generators with a Logarithmic Sobolev Inequality (see, e.g.,
Bakry, Gentil, and Ledoux~\cite{BGL14}, Theorem 5.2.1): if the
LSI constant is $\lambda$, then
$\frac{d}{dt}D_\mathrm{KL}(P(t)\|P_\mathrm{eq}) \leq
-2\lambda\,D_\mathrm{KL}(P(t)\|P_\mathrm{eq})$.
\end{proof}

\begin{remark}[The role of the convective nonlinearity]
The antisymmetry of the convective generator in
$L^2(P_\mathrm{eq})$ is the key structural property that makes the
proof work. Physically, it means that turbulent mixing, however
complex, does not slow down thermalisation. This may seem
counterintuitive: turbulence is often associated with slow relaxation
and anomalous transport. But these are features of the deterministic
dynamics (the approach to the attractor). In the stochastic setting,
the noise explores phase space independently of the deterministic
flow, and the dissipation rate determines how quickly the exploration
converges to the Gibbs measure. The convective term redistributes
energy among modes (creating the turbulent cascade) but does not
change the total rate of entropy production.
\end{remark}

\subsection{Comparison with flat space}

On a flat torus $\TT^d_L$ of side length $L$, the eigenvalues of
$-\Delta$ are $\lambda_{\vec k} = (2\pi/L)^2|\vec k|^2$ for
$\vec k \in \mathbb{Z}^d$. The spectral gap is
$\lambda_1 = (2\pi/L)^2$, which vanishes as $L \to \infty$. The
thermalisation rate $2\nu\lambda_1 = 2\nu(2\pi/L)^2 \to 0$.

On a compact quotient of $\HH^d$ with $\Ric \leq -\kappa^2 g$, the
spectral gap $\lambda_\Def \geq \kappa^2$ is independent of
the volume. The thermalisation rate $2\nu\lambda_\Def$ is bounded
below regardless of the size of the domain.

This means: a large box of flat-space fluid takes longer and longer
to thermalise as the box grows. A large domain of negatively curved
fluid thermalises at the same rate regardless of size. The curvature
provides a geometric mechanism for thermalisation that has no
flat-space analogue.

\section{The equilibrium velocity correlation function}
\label{sec:correlations}

Under the Gibbs measure, the equal-time two-point velocity correlation
function is
\begin{equation}\label{eq:correlation}
C_{ij}(x,y) \equiv \langle u_i(x)\,u_j(y)\rangle_\mathrm{eq}
= \frac{k_BT}{\rho}\sum_n (e_n)_i(x)\,(e_n)_j(y)
= \frac{k_BT}{\rho}\,G^\Def_{ij}(x,y),
\end{equation}
where $G^\Def_{ij}(x,y) = \sum_n (e_n)_i(x)(e_n)_j(y)$ is the
(regularised) Green's function of the identity operator restricted to
the divergence-free eigenspace. This is the Schwartz kernel of the Leray-Helmholtz projection onto divergence-free fields, whose equal-time spatial decay is governed purely by the scalar Laplacian.

More physically revealing is the connected correlation at unequal
times. For the linearised dynamics (dropping the convective term),
the time-dependent correlation is
\begin{equation}\label{eq:time-corr}
\langle u_n(t)\,u_m(0)\rangle_\mathrm{eq}
= \frac{k_BT}{\rho}\,\delta_{nm}\,e^{-\nu\lambda_n|t|}.
\end{equation}
Each mode decorrelates exponentially with rate $\nu\lambda_n$. The
slowest-decaying mode has rate $\nu\lambda_1 \geq \nu\lambda_\Def$:
even the longest-lived correlation decays at least as fast as
$e^{-\nu\lambda_\Def|t|}$.

In real space, the equal-time spatial correlation is a static kinematic property governed exclusively by the geometry and the scalar Laplacian (via the Leray-Helmholtz projector), independent of the viscous operator $\Delta_\Def$. On $\HH^d$, the Green's function of the
scalar Laplacian decays \emph{exponentially} in the geodesic
distance $r = d(x,y)$, imparting a rigorous exponential spatial decay to the equal-time velocity projection kernel:
\begin{equation}\label{eq:spatial-decay}
|G^\Def_{ij}(x,y)| \leq C\,e^{-\alpha\,r},\qquad
\alpha > 0,
\end{equation}
where $\alpha$ depends on the scalar spectral gap and the geometry. This is
in contrast to the exact algebraic decay $|G| \sim r^{-d}$ on $\R^d$ (arising specifically from the two spatial derivatives embedded within the flat-space Leray projector).

The exponential spatial decorrelation means that the equilibrium
state of a viscous fluid on a negatively curved manifold is
``more thermal'' than on flat space: distant points are essentially
independent. The curvature provides spatial localisation of
correlations, preventing the long-range order that can develop in
flat-space fluids.

\section{The physical cutoff and the continuum limit}
\label{sec:cutoff}

The spectrally truncated system with $N$ modes is the physically
meaningful model of a fluid at finite temperature. A real fluid of
$\mathcal{N}$ molecules in $d$ dimensions has approximately
$d\,\mathcal{N}$ independent velocity degrees of freedom, providing a
natural cutoff at $N \sim d\,\mathcal{N}$.

For this physical system, all the results of
Sections~\ref{sec:truncated}--\ref{sec:correlations} hold rigorously:
global well-posedness for all time (Proposition~\ref{prop:gwp}), a
unique Gibbs equilibrium (Theorem~\ref{thm:gibbs}), exponential
convergence to equilibrium (Theorem~\ref{thm:convergence}), and
exponentially decaying correlations.

The continuum limit $N \to \infty$ is mathematically interesting but
physically unattainable: it requires infinite energy
($\langle E\rangle = \frac{N}{2}k_BT \to \infty$). The question of
whether the deterministic ($T = 0$) continuum ($N = \infty$)
Navier-Stokes equations develop singularities is therefore a question
about an unphysical double limit. Any answer to this question, whether
affirmative or negative, has no bearing on the behaviour of any real
fluid.

\begin{remark}
The observation that the deterministic continuum NS equations are
physically incomplete is not new: it was made by Landau and
Lifshitz~\cite{LL57} in 1957 when they derived the stochastic terms
for linear fluctuating hydrodynamics. What is new here is the
combination with the kinematic selection of the viscous operator and
the spectral-gap phenomenon on negatively curved manifolds. Together,
these show that the physically complete (stochastic) equations have
strictly better analytical properties on negatively curved manifolds
than on flat space, with a thermalisation rate that is bounded below
by a geometric constant.
\end{remark}

\section{Generalization to Arbitrary Riemannian Manifolds}
\label{sec:generalization}

A natural physical question is whether the results derived here apply
universally to all Riemannian manifolds, rather than exclusively to
those with strictly negative curvature. The answer cleanly bifurcates
the mathematical framework from the physical thermodynamic behavior: the
stochastic formalism is universal, but the main physical
conclusions---volume-independent exponential thermalisation and spatial
decorrelation---rigorously require negative curvature. 

\subsection{The universal framework}

The core stochastic framework---the phase-space Liouville property, the
fluctuation-dissipation relation, and the existence of the Gibbs
measure---applies to \emph{all} compact Riemannian manifolds. To ensure
the fluid equations are universally well-posed, we must prove the
eigenvalues of the Stokes operator $A = -\PP\Delta_\Def = -\PP(\Delta_B
+ \Ric)$ are strictly non-negative ($\lambda_n \geq 0$). If $\lambda_n <
0$, the system would exhibit ``negative viscosity'' and explode.

For any divergence-free vector field $u$ ($\divg\, u = 0$), we evaluate
the total viscous dissipation $\langle Au, u \rangle_{L^2}$ via the
symmetric rate-of-strain (deformation) tensor $\Def(u)_{ij} =
\frac{1}{2}(\nabla_i u_j + \nabla_j u_i)$. The squared norm is:
\begin{equation}
|\Def(u)|^2 = \frac{1}{4}(\nabla_i u_j + \nabla_j u_i)(\nabla^i u^j
+ \nabla^j u^i) = \frac{1}{2} \left( |\nabla u|^2
+ \nabla_i u_j \nabla^j u^i \right).
\end{equation}
Multiplying by 2 and integrating over the compact manifold $M$:
\begin{equation}
2 \int_M |\Def(u)|^2 \, dV_g = \int_M |\nabla u|^2 \, dV_g
+ \int_M \nabla_i u_j \nabla^j u^i \, dV_g.
\end{equation}
Integrating the second term by parts yields:
\begin{equation}
\int_M \nabla_i u_j \nabla^j u^i \, dV_g
= -\int_M u_j \nabla_i \nabla^j u^i \, dV_g.
\end{equation}
Commuting the covariant derivatives via the Ricci identity
($\nabla_i \nabla^j u^i = \nabla^j (\nabla_i u^i) + \Ric^j_k u^k$)
and applying incompressibility ($\nabla_i u^i = 0$), this term
evaluates strictly to $-\int_M \Ric(u,u)\,dV_g$. Substituting this
back yields the universal Weitzenb\"ock identity for incompressible fluids:
\begin{equation}\label{eq:weitzenbock}
\langle Au, u \rangle_{L^2} = \int_M \left( |\nabla u|^2
- \Ric(u,u) \right) dV_g = 2 \int_M |\Def(u)|^2 \, dV_g.
\end{equation}
Because the right-hand side is manifestly non-negative, the dissipation
is non-negative on any Riemannian manifold ($\lambda_n \geq 0$).
The statistical Fokker-Planck construction is universally robust.

\subsection{The physical breakdown on other geometries}

While the mathematical framework holds everywhere, the spectral gap
($\lambda_1$) dictates the physical thermalisation. The physical
conclusions precipitously fail unless the curvature is strictly
negative:

\paragraph{Positively curved manifolds (e.g., spheres $\mathbb{S}^d$).}
Spheres possess continuous spatial symmetries, which mathematically
generate non-trivial Killing vector fields. By Noether's theorem, these
symmetries correspond to absolutely conserved macroscopic momenta (e.g.,
total angular momentum). A Killing field represents a rigid, solid-body
rotation and intrinsically has zero shear: $\Def(u) = 0$. Substituting
this into our Weitzenb\"ock identity~\eqref{eq:weitzenbock} yields
$\langle Au, u\rangle_{L^2} = 0$, meaning the eigenvalue is exactly
$\lambda_0 = 0$. According to the fluctuation-dissipation
relation~\eqref{eq:Dnn}, the thermal noise injected into this mode is
strictly zero. Physically, this is not a defect, but a manifestation of
momentum conservation: internal viscosity only dissipates relative
motion (shear), so it cannot dissipate a global rigid rotation. To
define a unique thermal resting state on such spaces, one must introduce
an external drag or artificially project out these conserved macroscopic
zero-modes.

\paragraph{Flat manifolds (e.g., flat tori $\TT^d_L$).}
Flat spaces similarly possess translational Killing vectors (conserved
total linear momentum) which must be factored out. However, the true
thermodynamic pathology of flat space lies in the remaining internal
modes. On a flat space ($\Ric = 0$), the identity becomes $\langle Au,
u\rangle_{L^2} = \int_M |\nabla u|^2 dV_g$. As analyzed in
Section~\ref{sec:convergence}, on a torus of volume $V$, the spectral
gap of the first non-zero mode explicitly depends on the domain size:
$\lambda_1 \propto V^{-2/d}$. In the thermodynamic limit ($V \to
\infty$), the spectral gap strictly vanishes ($\lambda_1 \to 0$). An
infinitely large flat-space fluid takes an eternity to thermalise
internally, and possesses algebraically decaying long-range spatial
correlations.

\paragraph{Negatively curved manifolds.}
On a manifold with strictly negative Ricci curvature
($\Ric \leq -\kappa^2 g$), the geometry acts as a stabilizing ``mass term'':
\begin{equation}
\langle A u, u \rangle_{L^2} = \int_M |\nabla u|^2 dV_g
- \int_M \Ric(u,u) dV_g \geq \int_M |\nabla u|^2 dV_g
+ \kappa^2 \int_M |u|^2 dV_g \geq \kappa^2 \|u\|_{L^2}^2.
\end{equation}
By Bochner's theorem, this strict positivity absolutely forbids Killing
vector fields, intrinsically breaking all global spatial symmetries.
Physically, this means the fluid possesses absolutely no conserved
macroscopic momenta (zero-modes); every possible global flow is forced
to undergo shear. The geometry itself acts as the absolute momentum sink
(the ``drag''). Crucially, it provides a volume-independent spectral gap
($\lambda_\Def \geq \kappa^2 > 0$). This geometric regularizer uniquely
forces the macroscopic fluid to rapidly thermalise to a unique rest
state and spatially decorrelate, completely immune to the conserved
momentum constraints of positively curved space and the
thermodynamic-limit pathologies of flat space.

\section{Discussion}
\label{sec:discussion}

We have shown that the combination of the kinematic selection of the
deformation Laplacian (from~\cite{WB2026}) with the topological
fluctuation-dissipation relation (from~\cite{Braun2010}) produces a
stochastic Navier-Stokes system on negatively curved manifolds with
remarkably clean properties: exponential thermalisation with a
geometry-determined rate, Gaussian equilibrium (by equipartition),
and exponentially decaying spatial correlations.

The central mechanism is the spectral gap of the deformation Laplacian.
On flat space, the spectral gap vanishes in the infinite-volume limit,
and the thermalisation rate goes to zero. On a negatively curved
manifold, the spectral gap is a geometric invariant that persists at
all scales. This means the curvature of the manifold controls the
statistical mechanics of the fluid, not just its deterministic dynamics.

Several directions for further work are natural.

\emph{Cohomological corrections on non-contractible domains.} The
fluctuation-dissipation derivation uses Poincar\'e's lemma, which
holds on contractible domains. On a compact manifold with non-trivial
first cohomology ($H^1(M) \neq 0$), the FD relation acquires a
correction associated with the harmonic 1-forms. The nature of this
correction, and whether it depends on the choice of viscous operator
(deformation versus Hodge Laplacian), is the subject of a forthcoming
investigation.

\emph{The continuum limit and renormalisation.} The spectrally
truncated system is well-posed for all time. Whether a meaningful
continuum limit exists (with or without renormalisation) is an open
question related to constructive quantum field theory. The spectral
gap on negatively curved manifolds may improve the infrared behaviour
of this limit.

\emph{Quantitative predictions.} The exponential thermalisation rate
$2\nu\lambda_\Def$ and the exponential spatial decorrelation length
$1/\alpha$ are in principle measurable in simulations of viscous
fluids on negatively curved surfaces. Realisations using soap films on
saddle-shaped surfaces or microfluidic devices with hyperbolic
geometry could provide experimental tests.

\section*{Declarations}

\textbf{Conflict of Interest:} The authors declare that they have no conflict of interest.\\
\textbf{Data Availability:} The authors did not use any data during the analysis of the paper.

\end{document}